\documentclass{article}
\usepackage{spconf,amsmath,graphicx,hyperref}
\usepackage{amssymb,amsfonts,amsthm}
\usepackage{algorithmic}
\usepackage{graphicx}
\usepackage{textcomp}
\usepackage{xcolor}
\usepackage[table]{xcolor} 
\usepackage{flushend}
\usepackage{orcidlink}
\usepackage{multirow}
\usepackage{lipsum}
\theoremstyle{plain}
\newtheorem{thm}{Theorem}

\newtheorem{prop}[thm]{Proposition}

\usepackage{makecell}
\usepackage{svg}
\usepackage[whole]{bxcjkjatype}
\usepackage{algorithm}
\usepackage{adjustbox}
\usepackage{colortbl}
\usepackage{cite}
\usepackage{flushend}
\usepackage{caption}

\title{Prox-friendly Log-magnitude prior on complex-valued signal}

\name{Kazuki Matsumoto$\qquad\;\;$Keidai Arai$\qquad\;\;$Kohei Yatabe\hspace{18pt}\thanks{This work was partly supported by JST FOREST Program (Grant Number JPMJFR2330, Japan).}}
\address{Tokyo University of Agriculture and Technology (TUAT), Tokyo, Japan}

\def\BibTeX{{\rm B\kern-.05em{\sc i\kern-.025em b}\kern-.08em
    T\kern-.1667em\lower.7ex\hbox{E}\kern-.125emX}}
    
\begin{document}
\ninept

\maketitle

\begin{abstract}

The logarithmic transform is essential in audio signal processing since human auditory perception is approximately logarithmic with respect to magnitude. 
However, directly incorporating prior knowledge about signals (e.g., harmonic structure) in the log-magnitude domain into optimization problems solved by standard proximal splitting algorithms remains challenging.
To address this issue, this paper proposes a novel regularizer termed EPILOG (Exponential Penalty for Imposing priors on LOG-magnitude).
EPILOG indirectly imposes prior knowledge on the log-magnitude of a complex-valued signal
through regularization of an auxiliary variable that is shown to be linked with the log-magnitude.
Furthermore, we derive its variable-wise proximity operators and develop a proximal splitting algorithm using these operators. 
Experiments on speech dereverberation demonstrate the effectiveness of the proposed regularizer, particularly in promoting cepstral-domain sparsity.

\end{abstract}

\begin{keywords}
Log-magnitude spectrogram, cepstrum, proximity operator, alternating direction method of multipliers (ADMM), speech dereverberation.
\end{keywords}

\section{Introduction}

The \emph{log-magnitude spectrogram} serves as an essential time-frequency representation in audio signal processing. It is constructed by applying an element-wise logarithm to the magnitude of a complex-valued spectrogram, which is obtained by the discrete Gabor transform (DGT) \cite{DGT,DGT2}.
The log-magnitude is widely used because human auditory perception exhibits an approximately logarithmic sensitivity to signal magnitude \cite{fastl2006psychoacoustics}.
Moreover, classical audio features, including the cepstrum \cite{1328092} and mel-frequency cepstral coefficients (MFCCs)~\cite{1163420}, are also based on log-magnitude representations, where a frequency transform such as the discrete Fourier transform (DFT) or discrete cosine transform (DCT) is applied to log-magnitude spectra.
Thus, this paper focuses on leveraging representations of the form $\mathbf{L}\log(|\mathbf{z}|)\in\mathbb{R}^K$ within a signal processing method, where $\mathbf{L}\in\mathbb{R}^{K\times N}$ is a linear operator (e.g., DCT) and $\mathbf{z}\in\mathbb{C}^N$ is a (vectorized) complex-valued spectrogram.

Optimization-based methods have been widely used in audio signal processing, where regularization plays a key role in providing a systematic way to incorporate prior knowledge of target signals into the resulting algorithms.
In particular, various structural properties of audio signals in time-frequency domain, including low-rankness~\cite{NMF_Bayesian,NMF_multichannel,ILRMA}, smoothness~\cite{OnoHPSS,SmoothSpecNMF,PhaseAwareHPSS}, and harmonic structure~\cite{harmonic_enhancement,11443227}, have been successfully utilized for many signal processing tasks.

Once such prior knowledge is introduced as a regularization term, an algorithm for solving the associated problem is required.
Proximal splitting algorithms have been widely employed for this purpose \cite{combettes2011,relax_them_all,SeparableSum,ADMM,ADMM-nonconvex}, owing to their flexibility in handling a broad class of optimization problems.
However, directly imposing prior knowledge on the log-magnitude, including sparsity of cepstrum, remains challenging within this framework, since the composition of the log-magnitude operation and a linear operator, as in $\mathbf{L}\log(|\mathbf{z}|)$, complicates the derivation of proximal splitting algorithms.

To overcome this challenge, this paper proposes
\textbf{EPILOG} (\textbf{E}xponential \textbf{P}enalty for \textbf{I}mposing priors on \textbf{LOG}-magnitude),
a regularization technique for incorporating various prior knowledge in the log-magnitude domain.
Instead of directly regularizing the log-magnitude of a complex-valued signal, EPILOG regularizes an auxiliary variable that is linked with the log-magnitude, extending a recently developed perspective-function-based method tailored to magnitude regularization \cite{LOP,araiVersatileTimeFrequencyRepresentations2023,10681150,kurodaConvexNonconvexFrameworkEnhancing2024}.
Furthermore, we provide an example of promoting cepstral-domain sparsity using EPILOG by imposing sparsity on the DCT coefficients of the log-magnitude representation.
In addition, we derive the variable-wise proximity operators for the proposed regularizer and develop an alternating direction method of multipliers (ADMM)--based algorithm.

The main contributions of this paper are summarized as follows:
\textbf{(1)} we formulate the EPILOG regularizer to enable the incorporation of various prior knowledge in the log-magnitude domain;
\textbf{(2)} we derive the variable-wise proximity operators for the EPILOG regularizer and develop an ADMM-based algorithm using these operators; and
\textbf{(3)} we demonstrate an application of the proposed method in speech dereverberation, experimentally confirming the effectiveness of the EPILOG regularizer and cepstral-domain sparsity.

\textbf{Notation}:
$\mathbb{R}$, $\mathbb{R}_+$, and $\mathbb{C}$ denote the sets of all real numbers, nonnegative real numbers, and complex numbers, respectively.
Bold lowercase and uppercase letters represent vectors and matrices, e.g.,
$\mathbf{x}=(x_n)_{n=1}^N\in\mathbb{R}^N$ and
$\mathbf{A}\in\mathbb{R}^{I\times J}$, respectively.
Operations $(\cdot)^2$, $|\cdot|$, $\sqrt{\cdot}$, and $\log(\cdot)$ are applied element-wise.
The proximity operator for a function $F:\mathbb{R}^N\to\mathbb{R}\cup\{+\infty\}$ and a parameter $\mu>0$ is defined by $\operatorname{prox}_{\mu F}(\mathbf{u})\in\arg\min_{\mathbf{v}\in\mathbb{R}^N}(F(\mathbf{v})+\frac{1}{2\mu}\|\mathbf{v}-\mathbf{u}\|_2^2)$ \cite{bauschkeConvexAnalysisMonotone2017}.

\section{Regularization of Magnitude Spectrogram}

Optimization-based methods are effective in a wide range of signal processing tasks.
For example, an audio signal recovery task, which aims to recover a signal $\mathbf{s}\in\mathbb{R}^L$ from a noisy and/or degraded observation $\mathbf{y}\in\mathbb{R}^L$, is typically formulated as
\begin{equation}
\min_{
\mathbf{x}\in\mathbb{R}^{L},\,\mathbf{z}\in\mathbb{C}^{N}
}
\left(
F_{\mathbf{y}}(\mathbf{x})
+\lambda\Omega(\mathbf{z})
\right)
\quad
\mathrm{s.t.}\quad
\mathbf{z}=\mathbf{G}\mathbf{x},
\label{eq:signal_recovery}
\vspace{-4pt}
\end{equation}
where $\mathbf{x}\in\mathbb{R}^L$ is the estimate of the signal $\mathbf{s}$,
$\mathbf{z}\in\mathbb{C}^N$ is the (vectorized) complex-valued spectrogram of $\mathbf{x}$,
$F_{\mathbf{y}}:\mathbb{R}^L\to\mathbb{R}\cup\{+\infty\}$ is a data-fidelity function ensuring the consistency with the observed signal $\mathbf{y}$, 
$\Omega:\mathbb{C}^N\to\mathbb{R}\cup\{+\infty\}$ is a regularization function inducing some prior knowledge about the signal to be recovered, 
$\lambda>0$ is a regularization parameter, 
and $\mathbf{G}:\mathbb{R}^L\to\mathbb{C}^{N}$ is a DGT operator.

Many practical audio applications assume that the magnitude $|\mathbf{z}|\in\mathbb{R}_+^N$ of the complex-valued spectrogram has a certain structure, such as low-rankness\cite{NMF_Bayesian,NMF_multichannel,ILRMA}, smoothness\cite{OnoHPSS, SmoothSpecNMF, PhaseAwareHPSS}, and harmonic structure\cite{harmonic_enhancement,11443227}.
Promoting these structures often requires a regularization function
$\Omega(\mathbf{z})=R\left(\mathbf{L}|\mathbf{z}|\right)$,
where the choice of a linear operator $\mathbf{L}\in\mathbb{R}^{K\times N}$ and a function $R:\mathbb{R}^K\to\mathbb{R}\cup\{+\infty\}$ determines the promoted structure (e.g., composition of a difference operator and a squared $\ell_2$-norm promotes smoothness).
However, 
such formulation obstructs the direct application of proximal splitting algorithms that handle $R$ by using its proximity operator, since the composition of the nonlinear operator $|\cdot|$ and the linear operator $\mathbf{L}$ makes it difficult to derive the algorithms.

To address this issue, a framework based on the properties of \emph{perspective functions} \cite{perspective, perspective2, perspective3} has recently been developed to enable flexible magnitude regularization \cite{LOP,araiVersatileTimeFrequencyRepresentations2023,10681150,kurodaConvexNonconvexFrameworkEnhancing2024}.
This framework introduces a function involving an auxiliary variable $\boldsymbol{\sigma}\in\mathbb{R}_+^N$, given by
\vspace{-1.5pt}
\begin{equation}
    \Omega_{\mathrm{Perspective}}^{(\mathbf{L},R,\gamma,\alpha)}(\mathbf{z})
    =
    \min_{\boldsymbol{\sigma}\in\mathbb{R}_+^N}
    \Bigg(
    \underbrace{    \sum_{n=1}^{N}
    \phi^{(\alpha)}(z_n,\sigma_n)}_{=\Phi^{(\alpha)}(\mathbf{z},\boldsymbol{\sigma})}
    +
    \gamma R(\mathbf{L}\boldsymbol{\sigma})
    \Bigg),
    \label{eq:perspective_regularization}
    \vspace{-1.5pt}
\end{equation}
where $\gamma>0$, $\alpha>0$, and $\phi^{(\alpha)}:\mathbb{C}\times\mathbb{R}_+\to\mathbb{R}\cup\{+\infty\}$ is the perspective function of $|\cdot|^2/2+\alpha^2/2$ given as follows \cite{perspective3}:
\vspace{-1.5pt}
\begin{align}
    \phi^{(\alpha)}(z,\sigma)
    &=\left\{
    \begin{array}{cl}
        \frac{|z|^2}{2\sigma}
        +
        \frac{\alpha^2\sigma}{2},
        & (\sigma>0),\\
        0
        & (z=0 \text{ and }\sigma=0),\\
        +\infty,
        & (\text{otherwise}).
    \end{array}
    \right.
    \label{eq:perspective-elm}
    \vspace{-1.5pt}
\end{align}

The function $\Omega_{\mathrm{Perspective}}^{(\mathbf{L},R,\gamma,\alpha)}(\mathbf{z})$ in Eq.~\eqref{eq:perspective_regularization} realizes indirect regularization of the magnitude $|\mathbf{z}|$ through the term $\gamma R(\mathbf{L}\boldsymbol{\sigma})$. 
The key mechanism is the coupling between $|\mathbf{z}|$ and the auxiliary variable $\boldsymbol{\sigma}$ induced by $\Phi^{(\alpha)}(\mathbf{z},\boldsymbol{\sigma})$, as supported in the following property \cite{araiVersatileTimeFrequencyRepresentations2023}.

\begin{prop}[Magnitude link]
\label{prop:perspective}
For any fixed $\mathbf{z}_0\in\mathbb{C}^N$, the minimizer of $\Phi^{(\alpha)}(\mathbf{z}_0,\cdot)$ is given by $\boldsymbol{\sigma}^\star_{\mathbf{z}_0}=|\mathbf{z}_0|/{\alpha}\in\mathbb{R}_+^N$.
\end{prop}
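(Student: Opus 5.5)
Since $\Phi^{(\alpha)}(\mathbf{z}_0,\boldsymbol{\sigma})=\sum_{n=1}^N\phi^{(\alpha)}(z_{0,n},\sigma_n)$ is a separable sum with each term depending only on $\sigma_n$, minimizing over $\boldsymbol{\sigma}\in\mathbb{R}_+^N$ reduces to $N$ independent scalar problems: minimize $\phi^{(\alpha)}(z_{0,n},\cdot)$ over $\sigma_n\in\mathbb{R}_+$. It therefore suffices to show that, for a fixed $z\in\mathbb{C}$, the unique minimizer of $\sigma\mapsto\phi^{(\alpha)}(z,\sigma)$ on $\mathbb{R}_+$ is $\sigma^\star=|z|/\alpha$.

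I will split into two cases according to whether $z=0$. If $z=0$, then by \eqref{eq:perspective-elm} we have $\phi^{(\alpha)}(0,0)=0$ and $\phi^{(\alpha)}(0,\sigma)=\alpha^2\sigma/2>0$ for $\sigma>0$. Hence $\sigma^\star=0=|z|/\alpha$ is the unique minimizer. If $z\neq0$, then $\phi^{(\alpha)}(z,0)=+\infty$, so only $\sigma>0$ matters. There $\phi^{(\alpha)}(z,\sigma)=\frac{|z|^2}{2\sigma}+\frac{\alpha^2\sigma}{2}$.

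For $z\neq0$ I would use the AM--GM inequality rather than calculus, since it gives uniqueness immediately:
\begin{equation*}
\frac{|z|^2}{2\sigma}+\frac{\alpha^2\sigma}{2}\ \ge\ \sqrt{\frac{|z|^2}{\sigma}\cdot\alpha^2\sigma}=\alpha|z|,
\end{equation*}
with equality if and only if $|z|^2/\sigma=\alpha^2\sigma$, i.e., $\sigma=|z|/\alpha$. As a cross-check, the same point can be obtained from the derivative $-|z|^2/(2\sigma^2)+\alpha^2/2=0$, using the strict convexity of the function on $(0,\infty)$.

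Collecting the componentwise minimizers gives $\boldsymbol{\sigma}^\star_{\mathbf{z}_0}=|\mathbf{z}_0|/\alpha$. None of these steps is genuinely hard. The only point requiring care is the boundary case $\sigma=0$, where the piecewise definition of the perspective function must be used to rule out (or confirm) $\sigma=0$ as a minimizer.
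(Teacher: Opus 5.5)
Your proof is correct, including the uniqueness implicit in ``the minimizer'' and the boundary case $\sigma=0$.

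The paper does not prove this proposition; it quotes it from the cited prior work on perspective-function regularization. For the analogous EPILOG result (Prop.~\ref{prop:EPILOG_logmag}) its only argument is ``the result follows from the optimality condition,'' which is the derivative calculation you list as a cross-check.

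Compared with that route, your AM--GM bound gives global minimality, uniqueness, and the optimal value $\alpha|z|$ in one step, without invoking convexity on $(0,\infty)$. Your separate treatment of $z=0$ also covers a case the stationarity equation misses. For $z=0$, the equation $-|z|^2/(2\sigma^2)+\alpha^2/2=0$ has no solution in $(0,\infty)$. The minimizer then sits at the boundary point $\sigma=0$, where only the piecewise definition (or a normal-cone argument) decides the matter.

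The optimality-condition route, on the other hand, is the one that carries over to $\psi^{(\alpha,\epsilon)}$. There the $\sigma$-dependence is exponential plus linear, so the product of the two terms is not constant and AM--GM gives no usable bound.

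A side benefit of your bound is that $\min_{\boldsymbol{\sigma}\in\mathbb{R}_+^N}\Phi^{(\alpha)}(\mathbf{z},\boldsymbol{\sigma})=\alpha\|\mathbf{z}\|_1$. Setting \textbf{(b-i)} (with $R=0$) is therefore exactly $\ell_1$ regularization with weight $\lambda\alpha$. This explains why it performs essentially like \textbf{(a-i)} in the experiments.
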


\noindent Prop.~\ref{prop:perspective} shows that minimizing $\Phi^{(\alpha)}(\mathbf{z},\boldsymbol{\sigma})$ encourages $\boldsymbol{\sigma}$ to match the magnitude $|\mathbf{z}|$ up to the scaling factor $1/\alpha$.
Accordingly, the structure imposed on $\boldsymbol{\sigma}$ by the term $\gamma  R(\mathbf{L}\boldsymbol{\sigma})$ is reflected in $|\mathbf{z}|$ through the weighted quadratic term $|z_n|^2/(2\sigma_n)$.

The term $\Phi^{(\alpha)}(\mathbf{z},\boldsymbol{\sigma})$ is compatible with proximal splitting algorithms owing to its analytically computable proximity operator \cite{LOP,perspective3}, and has been incorporated into structured time-frequency analysis \cite{araiVersatileTimeFrequencyRepresentations2023} and harmonic/percussive source separation \cite{10681150}.

\section{Proposed Method}

We propose \textbf{EPILOG}, a technique designed for flexible regularization in the log-magnitude domain. Our core methodology is to modify the magnitude-link property of the perspective function described in Prop.~\ref{prop:perspective}, enabling an auxiliary variable to be linked with the \emph{log-magnitude} of a complex-valued variable.

\subsection{EPILOG regularizer}
\label{sec:EPILOG_regularization}

We introduce a bivariate function $\psi^{(\alpha,\epsilon)}:\mathbb{C}\times\mathbb{R}\to\mathbb{R}$ defined as
\begin{equation}
\psi^{(\alpha,\epsilon)}(z,\sigma)
=
\frac{|z|^2+\epsilon}{2\exp(2\sigma)}
+
\alpha^2\sigma,
\label{eq:EPILOG-elm}
\end{equation}
where $\alpha>0$ and $\epsilon>0$.
As visualized in Fig.~\ref{fig:EPILOG_aligning}, for fixed $\sigma$, this function is quadratic with respect to $|z|$ (right panel), whereas for fixed $z$, it consists of an exponential term and a linear term with respect to $\sigma$ (middle panel).
Note that $\sigma$ is not constrained to be nonnegative unlike $\varphi^{(\alpha)}$ in Eq.~\eqref{eq:perspective-elm}, hence $\epsilon>0$ is required to keep $\psi^{(\alpha,\epsilon)}$ bounded below; if $\epsilon=0$, $\lim_{\sigma\to-\infty}\psi^{(\alpha,0)}(0,\sigma)=-\infty$.

Using Eq.~\eqref{eq:EPILOG-elm}, we define the \textbf{EPILOG regularizer} by
\vspace{-1pt}
\begin{equation}
    \Omega_{\mathrm{EPILOG}}^{(\mathbf{L},R,\gamma,\alpha,\epsilon)}
    (\mathbf{z})
    =
    \min_{\boldsymbol{\sigma}\in\mathbb{R}^N}
    \Bigg(
    \underbrace{
    \sum_{n=1}^{N}
    \psi^{(\alpha,\epsilon)}(z_n,\sigma_n)
    }_{
    =\Psi^{(\alpha,\epsilon)}
    (\mathbf{z},\boldsymbol{\sigma})
    }
    +
    \gamma R(\mathbf{L}\boldsymbol{\sigma})
    \Bigg),
    \label{eq:EPILOG_regularization}
    \vspace{-2pt}
\end{equation}
where $\mathbf{L}\in\mathbb{R}^{K\times N}$,
$R:\mathbb{R}^{K}\to\mathbb{R}\cup\{+\infty\}$,
and $\gamma>0$. 
EPILOG realizes indirect regularization of the log-magnitude $\log(|\mathbf{z}|)$, where $\gamma R(\mathbf{L}\boldsymbol{\sigma})$ imposes structure on the auxiliary variable $\boldsymbol{\sigma}$ and $\Psi^{(\alpha,\epsilon)}
    (\mathbf{z},\boldsymbol{\sigma})$ transfers it to $\log(|\mathbf{z}|)$.
This structural transfer is enabled by the following property of $\psi^{(\alpha,\epsilon)}$, which is analogous to the magnitude-link property of the perspective function $\phi^{(\alpha)}$ in Prop.~\ref{prop:perspective}.

\begin{prop}[Log-magnitude link]
\label{prop:EPILOG_logmag}
For any fixed $\mathbf{z}_0\in\mathbb{C}^N$, the minimizer of
$\Psi^{(\alpha,\epsilon)}(\mathbf{z}_0,\cdot)$, denoted by $\boldsymbol{\sigma}_{\mathbf{z}_0}^\star\in\mathbb{R}^N$, is given by
\begin{equation}
    \boldsymbol{\sigma}_{\mathbf{z}_0}^{\star}
    =
    \log\left(\!\sqrt{|\mathbf{z}_0|^2+\epsilon}\right)
    -
    \log(\alpha)
    \label{eq:EPILOG_minimizer}
\end{equation}
\end{prop}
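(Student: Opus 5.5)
Since $\Psi^{(\alpha,\epsilon)}(\mathbf{z}_0,\boldsymbol{\sigma})=\sum_{n}\psi^{(\alpha,\epsilon)}(z_{0,n},\sigma_n)$ is separable and each summand depends only on its own $\sigma_n\in\mathbb{R}$ (with no coupling constraint), minimizing over $\boldsymbol{\sigma}\in\mathbb{R}^N$ reduces to $N$ independent scalar problems. It therefore suffices to fix a single $z\in\mathbb{C}$, set $c=|z|^2+\epsilon>0$, and minimize the scalar function $g(\sigma)=\frac{c}{2}e^{-2\sigma}+\alpha^2\sigma$ over $\sigma\in\mathbb{R}$.

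I would first establish that $g$ has a unique global minimizer. Its second derivative is $g''(\sigma)=2c\,e^{-2\sigma}>0$, so $g$ is strictly convex on $\mathbb{R}$, and any stationary point is the unique global minimizer. Coercivity can also be checked directly: $g(\sigma)\to+\infty$ as $\sigma\to+\infty$ because of the linear term $\alpha^2\sigma$, and as $\sigma\to-\infty$ because the exponential term dominates, which requires $c>0$. This is exactly where $\epsilon>0$ enters, since for $z=0$ it guarantees $c>0$.

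Next I would solve the first-order condition $g'(\sigma)=-c\,e^{-2\sigma}+\alpha^2=0$. This gives $e^{2\sigma}=c/\alpha^2$, hence $\sigma^\star=\tfrac12\log c-\log\alpha=\log\sqrt{|z|^2+\epsilon}-\log\alpha$. Applying this componentwise yields Eq.~\eqref{eq:EPILOG_minimizer}.

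There is no real obstacle here. The only point needing care is justifying existence and uniqueness, namely that the stationary point is a global minimum and that the minimizer is well defined even when $z_n=0$. Both follow from strict convexity together with $\epsilon>0$.
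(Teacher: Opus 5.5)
Your proposal is correct and follows the same route as the paper's sketch: reduce to scalar problems by separability, then solve the first-order optimality condition to get $\sigma^\star=\tfrac12\log(|z|^2+\epsilon)-\log\alpha$. Your strict-convexity argument, using $c=|z|^2+\epsilon>0$, supplies the existence and uniqueness justification that the paper's one-line sketch leaves implicit.
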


\begin{proof}[Proof sketch]
The result follows from the optimality condition.
\end{proof}

\noindent
Prop.~\ref{prop:EPILOG_logmag} shows that minimizing the term
$\Psi^{(\alpha,\epsilon)}(\mathbf{z},\boldsymbol{\sigma})$
encourages $\boldsymbol{\sigma}$ to match
$\log(\sqrt{|\mathbf{z}|^2+\epsilon})\approx \log(|\mathbf{z}|)$ for sufficiently small $\epsilon$, up to the offset $-\log(\alpha)$, coupling $\mathbf{z}$ and $\boldsymbol{\sigma}$ in the log-magnitude domain.
The trajectory of $(z_0,\sigma_{z_0}^{\star})\in\mathbb{R}^2$ obtained by varying $z_0\in\mathbb{R}$ (i.e., the red curve in the left panel of Fig.~\ref{fig:EPILOG_aligning}) illustrates this property, where this curve approximates $\sigma=\log(|z|)$.

\begin{figure}
    \centering
    \includegraphics[scale=1]{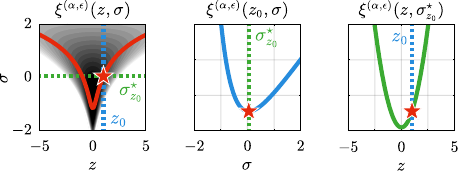}
    \vspace{-5pt}
    \caption{
    Visualization of $\psi^{(\alpha,\epsilon)}(z,\sigma)$ in Eq.~\eqref{eq:EPILOG-elm} for $(\alpha,\epsilon)=(1,0.1)$, where $z\in\mathbb{R}$ here.
    The left panel shows the values of $\psi^{(\alpha,\epsilon)}(z,\sigma)$, where brighter shades represent larger values.
    The middle and right panels show its slices at $z=z_0=1$ (blue dotted line) and $\sigma=\sigma_{z_0}^\star=\log(\sqrt{1+\epsilon})$ (green dotted line).
    The red star in the middle panel shows $\sigma_{z_0}^\star=\arg \min_{\sigma\in\mathbb{R}}\psi^{(\alpha,\epsilon)}(z_0,\sigma)$.
    The red curve in the left panel is the trajectory of $(z_0,\sigma_{{z}_0}^\star)$ obtained by varying $z_0\in\mathbb{R}$, illustrating the log-magnitude-link property in Prop.~\ref{prop:EPILOG_logmag}. 
    }
    \label{fig:EPILOG_aligning}
    \vspace{-3pt}
\end{figure}

\subsection{Example: Cepstral-domain sparsity via EPILOG}
\label{sec:EPILOG_cepstrum}

As a use case of the EPILOG regularizer, we propose \emph{cepstral-domain regularization}. 
Cepstral analysis \cite{1328092,1163420} applies a frequency transformation $\mathbf{L}$ (such as DCT or DFT along the frequency direction) to a log-magnitude spectrogram, expressed as $\mathbf{L}\log(|\mathbf{z}|)$. 
This operation decomposes the spectrum into a global spectral envelope and fine periodic structures. 
Since the cepstrum of a clean speech signal is typically sparse, we exploit this sparsity as prior knowledge.

By using the EPILOG regularizer $\Omega_{\mathrm{EPILOG}}^{(\mathbf{L},R,\gamma,\alpha,\epsilon)}(\mathbf{z})$ in Eq.~\eqref{eq:EPILOG_regularization}, the cepstral-domain sparsity can be imposed by setting $\mathbf{L}$ as a DCT operator along the frequency direction and $R$ as a sparsity-inducing function. Since $\boldsymbol{\sigma}$ is linked with the log-magnitude $\log(|\mathbf{z}|)$, the term $\gamma R(\mathbf{L}\boldsymbol{\sigma})$ indirectly regularizes the cepstrum $\mathbf{L}\log(|\mathbf{z}|)$.

Interestingly, the dependence of the EPILOG regularizer on the global signal scale can be reduced by leaving the direct current (DC) component unregularized.
Indeed, for a globally scaled spectrogram $s\mathbf{z}$ with the scaling factor $s>0$, the log-magnitude-link property in Eq.~\eqref{eq:EPILOG_minimizer} gives
$\boldsymbol{\sigma}^{\star}_{s\mathbf{z}}
=
\log(\!\sqrt{s^2|\mathbf{z}|^2+\epsilon})
-
\log(\alpha)
\approx
\log(|\mathbf{z}|)
+
\log(s/\alpha)$
for sufficiently small $\epsilon$.
Thus, the global \mbox{scale $s$}, as well as the parameter $\alpha$, appears only as an additive constant in $\boldsymbol{\sigma}^{\star}_{s\mathbf{z}}$, which is mapped by the DCT to the DC coefficient.
Therefore, excluding the DC coefficient from regularization reduces the scale dependence.
To this end, we employ the weighted $\ell_1$-norm given by
\begin{equation}
    \|\mathbf{u}\|_{\mathbf{w}}
    =
    \|\mathbf{w}\odot\mathbf{u}\|_1,
    \label{eq:weighted}
\end{equation}
where $\odot$ denotes element-wise multiplication and $\mathbf{w}\in\mathbb{R}_+^N$ is a weight vector whose entries corresponding to the DC coefficients are set to $0$, while the others are set to $1$.

\subsection{ADMM-based algorithm for EPILOG}
\label{sec:EPILOG_admm}

This section aims to construct an optimization algorithm tailored to the EPILOG regularizer.
The function $\psi^{(\alpha,\epsilon)}(z,\sigma)$ is not jointly convex with respect to $(z,\sigma)$, and hence uniqueness of its joint proximity operator is not guaranteed.
Nevertheless, it is convex with respect to each variable individually.
Based on this property, we derive the variable-wise proximity operators of $\psi^{(\alpha,\epsilon)}$ as an ingredient for constructing optimization algorithms for EPILOG regularization.

\begin{prop}
\label{thm:EPILOG_prox}
For $\mu>0$, the proximity operators of
$\psi^{(\alpha,\epsilon)}$ in Eq.~\eqref{eq:EPILOG-elm} with respect to $z$ and $\sigma$ are respectively given by
\begin{align}
    \!\!\!\!\operatorname{prox}_{\mu\psi^{(\alpha,\epsilon)}(\cdot,\sigma)}(x)
    &=
    \frac
    {x}{1+\mu\exp(-2\sigma)},
    \label{eq:EPILOUGE_coordinate_prox1}
    \\
    \!\!\!\!\operatorname{prox}_{\mu\psi^{(\alpha,\epsilon)}(z,\cdot)}(\sigma)
    &=
    -\frac{1}{2}
    \operatorname{prox}_{2\mu(|z|^2+\epsilon)\exp}
    (2(\mu\alpha^2-\sigma)),
    \label{eq:EPILOUGE_coordinate_prox2}
\end{align}
where
$\operatorname{prox}_{\widetilde{\mu}\exp}(u)
=u-W_0(\widetilde{\mu}\exp(u))$ and $W_0$ is the Lambert W-function, i.e., the inverse of $\xi\mapsto \xi\exp(\xi)$ on $\xi\in[-1,+\infty)$.
\end{prop}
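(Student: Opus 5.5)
Both formulas come from writing out the one-dimensional prox objective in each variable and using strict convexity to get a unique stationary point. Throughout, $\psi^{(\alpha,\epsilon)}$ is smooth and strictly convex in each variable separately.

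\textbf{Proximity operator in $z$.} Fix $\sigma$ and view $z\in\mathbb{C}$ as an element of $\mathbb{R}^2$. The objective is
\[
\frac{|z|^2+\epsilon}{2\exp(2\sigma)}+\alpha^2\sigma+\frac{1}{2\mu}|z-x|^2 .
\]
This is a strictly convex quadratic in $z$. Its terms $\epsilon/(2\exp(2\sigma))$ and $\alpha^2\sigma$ are constant in $z$ and can be dropped. Setting the (Wirtinger/real) gradient to zero gives
\[
\exp(-2\sigma)\,z+\mu^{-1}(z-x)=0 ,
\]
so $z=x/(1+\mu\exp(-2\sigma))$, which is Eq.~\eqref{eq:EPILOUGE_coordinate_prox1}.

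\textbf{Proximity operator in $\sigma$.} Fix $z$ and write $c=|z|^2+\epsilon>0$. The objective
\[
g(v)=\tfrac{c}{2}\exp(-2v)+\alpha^2 v+\tfrac{1}{2\mu}(v-\sigma)^2
\]
is strictly convex and coercive, so its minimizer is the unique root of
\[
-c\exp(-2v)+\alpha^2+\mu^{-1}(v-\sigma)=0 .
\]
Substitute $t=-2v$. Multiplying the equation by $2\mu$ turns it into
\[
2\mu c\exp(t)+t-2(\mu\alpha^2-\sigma)=0 .
\]
This is exactly the optimality condition $t-u+\widetilde\mu\exp(t)=0$ for $\operatorname{prox}_{\widetilde\mu\exp}(u)$, with $\widetilde\mu=2\mu c$ and $u=2(\mu\alpha^2-\sigma)$. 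Hence $t=\operatorname{prox}_{2\mu c\exp}(u)$ and $v=-t/2$, which is Eq.~\eqref{eq:EPILOUGE_coordinate_prox2}.

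\textbf{Closed form of $\operatorname{prox}_{\widetilde\mu\exp}$.} Solve $t+\widetilde\mu e^{t}=u$ by setting $\xi=u-t$. Then $\xi=\widetilde\mu e^{u-\xi}$, that is, $\xi e^{\xi}=\widetilde\mu e^{u}>0$. Since $\xi=\widetilde\mu e^{t}>0$ lies in the principal-branch domain and $\widetilde\mu e^u>0$, we get $\xi=W_0(\widetilde\mu e^u)$ uniquely, so $t=u-W_0(\widetilde\mu e^{u})$.

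The only delicate point is the change of variables in the $\sigma$-step: the factor $2$ and the sign of the argument must be tracked so that the equation matches the standard $\operatorname{prox}$ of $\exp$. The rest is routine calculus, with uniqueness guaranteed by strict convexity.
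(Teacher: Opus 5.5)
Your proof is correct. It reaches the same reduction as the paper, but by a more elementary, self-contained route.

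The paper only sketches the argument, appealing to standard prox-calculus rules and a known formula. Presumably it runs as follows. The $z$-step is the prox of $\tfrac{\mu e^{-2\sigma}}{2}|\cdot|^2$ plus a constant. In the $\sigma$-step, the linear term $\mu\alpha^2 v$ shifts the argument by $-\mu\alpha^2$. The precomposition with $v\mapsto -2v$ is handled by the scaling rule $\operatorname{prox}_{g(\beta\,\cdot)}(x)=\beta^{-1}\operatorname{prox}_{\beta^2 g}(\beta x)$ with $\beta=-2$ and $g=\tfrac{\mu c}{2}\exp$. This produces at once the factor $-\tfrac12$, the parameter $2\mu c$, and the argument $2(\mu\alpha^2-\sigma)$. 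The Lambert-$W$ expression for $\operatorname{prox}_{\widetilde\mu\exp}$ is then quoted from Bauschke--Combettes.

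You instead write down the scalar stationarity condition and recognize it, after the substitution $t=-2v$, as the optimality condition of $\operatorname{prox}_{2\mu c\exp}$. You also derive the $W_0$ formula directly rather than citing it.

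What your version buys is explicitness about points the sketch leaves to the references:
\begin{itemize}
\item existence and uniqueness of the minimizer, via strict convexity and coercivity, which uses $c=|z|^2+\epsilon>0$;
\item why the principal branch is the right one, since $\xi=\widetilde\mu e^{t}>0$.
\end{itemize}
The rule-based version is shorter and modular: the same two rules give the $\sigma$-prox for any convex $g$ in place of $\exp$.

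One cosmetic slip: your displayed equation in $t$ comes from multiplying the stationarity condition by $-2\mu$, not $2\mu$. This is harmless, since the right-hand side is zero.
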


\begin{proof}[Proof sketch]
The result follows from the basic properties of proximity operators \cite{SeparableSum} and the proximity operator of $\exp(\cdot)$ \cite{bauschkeConvexAnalysisMonotone2017}.
\end{proof}

Using these proximity operators, we propose an ADMM-based algorithm by alternately minimizing the augmented Lagrangian with respect to the primal variables.
To simplify the algorithm, we assume that the DGT operator $\mathbf{G}$ is Parseval tight \cite{janssenCharacterizationComputationCanonical2002}, i.e., $\mathbf{G}^{\mathsf H}\mathbf{G}=\mathbf{I}$, where $\mathbf{I}$ is the identity matrix.
We substitute the EPILOG regularizer in Eq.~\eqref{eq:EPILOG_regularization} into Eq.~\eqref{eq:signal_recovery} as $\Omega = \Omega_{\mathrm{EPILOG}}^{(\mathbf{L},R,\gamma,\alpha,\epsilon)}$. 
Then, we apply variable splitting to obtain the following optimization problem:
\begin{equation}
\begin{array}{cl}
    \displaystyle
    \min_{\substack{
        \mathbf{x}\in\mathbb{R}^{L}
        \mathbf{z}\in\mathbb{C}^{N},\\
        \boldsymbol{\sigma}\in\mathbb{R}^{N},\,
        \widetilde{\boldsymbol{\sigma}}\in\mathbb{R}^{N},\,
        \boldsymbol{\tau}\in\mathbb{R}^{K}
    }}
    &\!\!\!\!\!\!\!\!\!
    \left(
    F_{\mathbf{y}}(\mathbf{x})
    +
    \lambda\Psi^{(\alpha,\epsilon)}
    (\mathbf{z},\boldsymbol{\sigma})
    +
    \lambda\gamma R(\boldsymbol{\tau})
    \right)
    \\
    \mathrm{s.t.}
    &
    \mathbf{z}=\mathbf{G}\mathbf{x},
    \quad\boldsymbol{\sigma}=\widetilde{\boldsymbol{\sigma}},
    \quad\boldsymbol{\tau}=\mathbf{L}\widetilde{\boldsymbol{\sigma}}.
    \end{array}
    \label{eq:EPILOG_split}
\end{equation}
The augmented Lagrangian \cite{ADMM} associated with Eq.~\eqref{eq:EPILOG_split} is given by
\begin{align}
\begin{array}{@{}r@{\;}l@{\;}l@{\;}l@{}}
    \displaystyle \mathcal{L}_{\rho}
    &\displaystyle \!{}={} F_{\mathbf{y}}(\mathbf{x})
    &\displaystyle \!{}+{}
    \left\langle
    \boldsymbol{\zeta}_{\mathbf{z}},
    \mathbf{G}\mathbf{x}-\mathbf{z}
    \right\rangle
    &\displaystyle \!{}+{}
    \frac{\rho}{2}
    \left\|
    \mathbf{G}\mathbf{x}-\mathbf{z}
    \right\|_2^2
    \\[8pt]
    &\displaystyle \!{}+{}
    \lambda\Psi^{(\alpha,\epsilon)}
    (\mathbf{z},\boldsymbol{\sigma})
    &\displaystyle \!{}+{}
    \left\langle
    \boldsymbol{\zeta}_{\boldsymbol{\sigma}},
    \widetilde{\boldsymbol{\sigma}}-\boldsymbol{\sigma}
    \right\rangle
    &\displaystyle \!{}+{}
    \frac{\rho}{2}
    \left\|
    \widetilde{\boldsymbol{\sigma}}-\boldsymbol{\sigma}
    \right\|_2^2
    \label{eq:EPILOG_augmented_lagrangian}
    \\[8pt]
    &\displaystyle \!{}+{}
    \lambda\gamma R(\boldsymbol{\tau})
    &\displaystyle \!{}+{}
    \left\langle
    \boldsymbol{\zeta}_{\boldsymbol{\tau}},
    \mathbf{L}\widetilde{\boldsymbol{\sigma}}-\boldsymbol{\tau}
    \right\rangle
    &\displaystyle \!{}+{}
    \frac{\rho}{2}
    \left\|
    \mathbf{L}\widetilde{\boldsymbol{\sigma}}-\boldsymbol{\tau}
    \right\|_2^2.
\end{array}
\end{align}
where
$\boldsymbol{\zeta}_{\mathbf{z}}\in\mathbb{C}^{N}$,
$\boldsymbol{\zeta}_{\boldsymbol{\sigma}}\in\mathbb{R}^{N}$,
and
$\boldsymbol{\zeta}_{\boldsymbol{\tau}}\in\mathbb{R}^{K}$
are unscaled dual variables associated with the constraints in Eq.~\eqref{eq:EPILOG_split}, $\rho>0$ is a parameter, and
$\langle\cdot,\cdot\rangle$ denotes the real inner product.

\begin{algorithm}[t]
\caption{Audio signal recovery with the EPILOG regularizer}
\label{alg:EPILOG}
\begin{algorithmic}[1]
\vspace{-2pt}
\STATE Initialize
$\mathbf{x}^{[0]}$,
$\mathbf{z}^{[0]}$,
$\boldsymbol{\sigma}^{[0]}$,
$\widetilde{\boldsymbol{\sigma}}^{[0]}$,
$\boldsymbol{\tau}^{[0]}$,
$\widetilde{\boldsymbol{\zeta}}_{\mathbf{z}}^{[0]}$,
$\widetilde{\boldsymbol{\zeta}}_{\boldsymbol{\sigma}}^{[0]}$,
and
$\widetilde{\boldsymbol{\zeta}}_{\boldsymbol{\tau}}^{[0]}$.

\FOR{$k=0,1,2,\ldots$}

\STATE \hspace{1pt}$\displaystyle
\mathbf{x}^{[k+1]}
=
\operatorname{prox}_{F_{\mathbf{y}}/\rho}
(
\mathbf{G}^{\mathsf H}
(
\mathbf{z}^{[k]}-\widetilde{\boldsymbol{\zeta}}_{\mathbf{z}}^{[k]}
)
)$

\STATE $\displaystyle
\widetilde{\boldsymbol{\sigma}}^{[k+1]}
=
(
\mathbf{I}+\mathbf{L}^{\mathsf T}\mathbf{L}
)^{-1}
(
\boldsymbol{\sigma}^{[k]}
-
\widetilde{\boldsymbol{\zeta}}_{\boldsymbol{\sigma}}^{[k]}
+
\mathbf{L}^{\mathsf T}
(
\boldsymbol{\tau}^{[k]}-\widetilde{\boldsymbol{\zeta}}_{\boldsymbol{\tau}}^{[k]}
)
)$

\STATE \hspace{1.9pt}$\displaystyle
\mathbf{z}^{[k+1]}
=
\operatorname{prox}_{
(\lambda/\rho)
\Psi^{(\alpha,\epsilon)}
(\cdot,\boldsymbol{\sigma}^{[k]})
}
(
\mathbf{G}\mathbf{x}^{[k+1]}
+
\widetilde{\boldsymbol{\zeta}}_{\mathbf{z}}^{[k]}
)$ \hspace{3pt}$ \triangleright$ Eq.~\eqref{eq:EPILOUGE_coordinate_prox1}

\STATE $\displaystyle
\boldsymbol{\sigma}^{[k+1]}
=
\operatorname{prox}_{
(\lambda/\rho)
\Psi^{(\alpha,\epsilon)}
(\mathbf{z}^{[k+1]},\cdot)
}
(
\widetilde{\boldsymbol{\sigma}}^{[k+1]}
+
\widetilde{\boldsymbol{\zeta}}_{\boldsymbol{\sigma}}^{[k]}
)$ \hspace{3pt}$\triangleright$ Eq.~\eqref{eq:EPILOUGE_coordinate_prox2}

\STATE \hspace{0.75pt}$\displaystyle
\boldsymbol{\tau}^{[k+1]}
=
\operatorname{prox}_{(\lambda\gamma/\rho)R}
(
\mathbf{L}\widetilde{\boldsymbol{\sigma}}^{[k+1]}
+
\widetilde{\boldsymbol{\zeta}}_{\boldsymbol{\tau}}^{[k]}
)$

\STATE \hspace{1.3pt}$\displaystyle
\widetilde{\boldsymbol{\zeta}}_{\mathbf{z}}^{[k+1]}
=
\widetilde{\boldsymbol{\zeta}}_{\mathbf{z}}^{[k]}
+
\mathbf{G}\mathbf{x}^{[k+1]}
-
\mathbf{z}^{[k+1]}$

\STATE \hspace{1.3pt}$\displaystyle
\widetilde{\boldsymbol{\zeta}}_{\boldsymbol{\sigma}}^{[k+1]}
=
\widetilde{\boldsymbol{\zeta}}_{\boldsymbol{\sigma}}^{[k]}
+
\widetilde{\boldsymbol{\sigma}}^{[k+1]}
-
\boldsymbol{\sigma}^{[k+1]}$

\STATE \hspace{1.3pt}$\displaystyle
\widetilde{\boldsymbol{\zeta}}_{\boldsymbol{\tau}}^{[k+1]}
=
\widetilde{\boldsymbol{\zeta}}_{\boldsymbol{\tau}}^{[k]}
+
\mathbf{L}\widetilde{\boldsymbol{\sigma}}^{[k+1]}
-
\boldsymbol{\tau}^{[k+1]}$

\ENDFOR
\end{algorithmic}
\end{algorithm}
\setlength{\textfloatsep}{20.8pt}

The algorithm for solving Eq.~\eqref{eq:EPILOG_split} is summarized in Algorithm~\ref{alg:EPILOG}, where $\mathcal{L}_\rho$ is alternately minimized with respect to the primal variables $\mathbf{x}$, $\widetilde{\boldsymbol{\sigma}}$, $\mathbf{z}$, $\boldsymbol{\sigma}$, and $\boldsymbol{\tau}$, while the scaled dual variables $(\widetilde{\boldsymbol{\zeta}}_{\mathbf{z}},\widetilde{\boldsymbol{\zeta}}_{\boldsymbol{\sigma}},\widetilde{\boldsymbol{\zeta}}_{\boldsymbol{\tau}})=(\boldsymbol{\zeta}_{\mathbf{z}},\boldsymbol{\zeta}_{\boldsymbol{\sigma}},\boldsymbol{\zeta}_{\boldsymbol{\tau}})/\rho$ are updated by the corresponding gradient-ascent steps.
The proximity operators for $\Psi^{(\alpha,\epsilon)}$ with respect to $\mathbf{z}$ and $\boldsymbol{\sigma}$ are evaluated element-wise using the proximity operators of $\psi^{(\alpha,\epsilon)}$ derived in Prop.~\ref{thm:EPILOG_prox}.
The convergence analysis of this algorithm is left as future work.

\begin{figure*}
    \centering
    \includegraphics[scale=1]{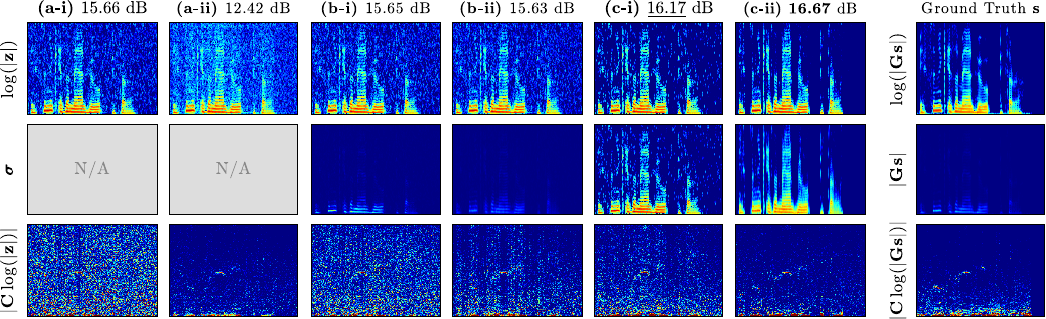}
    \vspace{-4pt}
\caption{
Examples of the variables obtained by each method.
The left six columns show the results of \textbf{(a-i)}--\textbf{(c-ii)}, where $\textbf{(c-ii)}$ is the proposed method that promotes \emph{cepstral-domain sparsity}. 
The above numbers indicate the SI-SNR scores in dB, where bold and underlined numbers represent the best and second-best scores, respectively.
The top and middle rows show the log-magnitude spectrograms $\log(|\mathbf z|)$ of the estimated signals $\mathbf{z}$ and its corresponding auxiliary variables $\boldsymbol{\sigma}$, respectively; methods \textbf{(a)} have no auxiliary variables and hence ``N/A'' is shown.
The bottom row shows the magnitudes of the DCT coefficients (i.e., the magnitudes of the cepstral coefficients) \(|\mathbf{C}\log(|\mathbf z|)|\), where $\mathbf{C}$ denotes the DCT operator along the frequency direction.
The rightmost column shows the ground-truth signal in the log-magnitude domain $\log(|\mathbf{G}\mathbf{s}|)$ (top), in the magnitude domain $|\mathbf{G}\mathbf{s}|$ (middle), and in the cepstral domain $|\mathbf{C}\log(|\mathbf{G}\mathbf{s}|)|$ (bottom), where $\mathbf{G}$ is the DGT operator.
}
\vspace{-4pt}
\label{fig:placeholder}
\end{figure*}

\section{Application to Speech Dereverberation}
\label{sec:experiment}

We conducted an experiment on speech dereverberation to confirm the effectiveness of the EPILOG regularizer.
The observed signal was generated by $\mathbf{y}=\mathbf{A}\mathbf{s}+\mathbf{n}$, where $\mathbf{s}\in\mathbb{R}^{L}$ is a clean speech signal, $\mathbf{A}\in\mathbb{R}^{L\times L}$ is a convolution matrix corresponding to a room impulse response (RIR), and $\mathbf{n}\in\mathbb{R}^L$ is additive Gaussian noise.
Clean speech signals were taken from the LibriTTS-R test-clean dataset \cite{koizumi_libritts-r_2023}, and RIRs were taken from the BUT Reverb Database \cite{szoke_building_2019}.
All the signals used in the experiment were resampled to 8~kHz.
For the DGT, we used a Parseval-tight Hann window with a length of 256 samples and a hop size of 128 samples.
The signals were truncated to 128 time frames in the DGT domain.
The noise level was set to $-20$~dB relative to $\mathbf{A}\mathbf{s}$.
The signal $\mathbf{s}$ and the matrix $\mathbf{A}$ were normalized so that both $\mathbf{s}$ and $\mathbf{A}\mathbf{s}$ have the root mean square of 1.

\subsection{Algorithms and parameters}
\label{sec:compared_methods}

We compared six methods to solve the dereverberation problem formulated in Eq.~\eqref{eq:signal_recovery}, where the data-fidelity function is set to $F_{\mathbf{y}}(\mathbf{x}) = (1/2)\|\mathbf{A}\mathbf{x}-\mathbf{y}\|_2^2$.
The methods are categorized into groups \textbf{(a)}, \textbf{(b)}, and \textbf{(c)} based on their regularization domains:
\vspace{2pt}
\begin{itemize}
    \item[{\makebox[1.1em][c]{\textbf{(a)}}}]
    complex-valued: $\Omega(\mathbf{z})=R(\mathbf{L}\mathbf{z})$,

    \item[{\makebox[1.1em][c]{\textbf{(b)}}}]
    magnitude: $\Omega(\mathbf{z})=\Omega_{\mathrm{Perspective}}^{(\mathbf{L},R,\gamma,\alpha)}(\mathbf{z})$ in Eq.~\eqref{eq:perspective_regularization},

    \item[{\makebox[1.1em][c]{\textbf{(c)}}}]
    log-magnitude: $\Omega(\mathbf{z})=\Omega_{\mathrm{EPILOG}}^{(\mathbf{L},R,\gamma,\alpha,\epsilon)}(\mathbf{z})$ in Eq.~\eqref{eq:EPILOG_regularization}.
    \vspace{2pt}
\end{itemize}
For each domain, we consider two regularization settings, labeled \textbf{(i)} and \textbf{(ii)}.
The setting \textbf{(i)} serves as a baseline without DCT.
Specifically, for \textbf{(a-i)}, we set $\mathbf{L}=\mathbf{I}$ and $R=\|\cdot\|_1$.
For \textbf{(b-i)} and \textbf{(c-i)}, we set $\mathbf{L}=\mathbf{I}$ and $R=0$ to exclude the term $\gamma R(\mathbf{L}\boldsymbol{\sigma})$.
In the setting \textbf{(ii)}, we set $\mathbf{L}$ to the DCT matrix and $R=\|\cdot\|_{\mathbf{w}}$ as defined in Eq.~\eqref{eq:weighted}.
The setting \textbf{(c-ii)} corresponds to the cepstral-domain sparsity proposed in Sect.~\ref{sec:EPILOG_cepstrum}.
All the algorithms were derived using the variable splitting scheme, detailed in Sect.~\ref{sec:EPILOG_admm}.
The number of iterations was set to 1000.
The methods were evaluated on 10 random pairs of speech signals and RIRs.

For each method, all the parameters were tuned to maximize SI-SNR \cite{rouxSDRHalfbakedWell2019}.
Five pairs of speech signals and RIRs were used for this tuning, which were distinct from those used for evaluation\footnote{
The optimized parameters $(\rho,\lambda,\alpha,\gamma,\epsilon)$ for each method were
\textbf{(a\text{-}i)}: $(0.95,0.35,\text{N/A},\text{N/A},\text{N/A})$,
\textbf{(a\text{-}ii)}: $(0.95,0.31,\text{N/A},\text{N/A},\text{N/A})$,
\textbf{(b\text{-}i)}: $(3.8\times10^{-3},0.78,0.44,2.93\times10^{-5},\text{N/A})$,
\textbf{(b\text{-}ii)}: $(0.67,0.66,$ $0.50,0.11,\text{N/A})$,
\textbf{(c\text{-}i)}: $(0.99,0.83,0.89,0.82,0.13)$, and
\textbf{(c\text{-}ii)}: $(0.99,$ $0.82,0.89,0.83,0.13)$, where N/A indicates an unused parameter.
}.

\subsection{Results}

First, we visually investigate the obtained signals.
Fig.~\ref{fig:placeholder} displays the estimated spectrogram $\mathbf{z}$ and its corresponding auxiliary variable $\boldsymbol{\sigma}$ obtained by each method, where $\mathbf{z}$ is presented in the log-magnitude domain $\log(|\mathbf{z}|)$ and the cepstral-domain $|\mathrm{DCT}(\log(|\mathbf{z}|))|$.
For methods \textbf{(b)}, which aim to regularize the magnitude, the auxiliary variable $\boldsymbol{\sigma}$ (middle) exhibited a typical structure of a magnitude spectrogram (see $|\mathbf{Gs}|$ in the rightmost-middle panel in Fig.~\ref{fig:placeholder} for comparison).
In contrast, for the proposed methods \textbf{(c)}, which aim to regularize the log-magnitude, $\boldsymbol{\sigma}$ (middle) closely resembled $\log(|\mathbf{z}|)$ (top), visually confirming the log-magnitude-link property of the EPILOG regularizer.
Moreover, the method \textbf{(c-ii)} yielded a sparser and smoother log-magnitude spectrogram $\log(|\mathbf{z}|)$ (top), resulting in fewer artifacts and improved performance compared to \mbox{\textbf{(c-i)}}.
This improvement is attributed to the cepstral-domain sparsity of $\mathbf{z}$ realized by regularizing the auxiliary variable $\boldsymbol{\sigma}$ (middle).
The cepstral representation $|\mathrm{DCT}(\log(|\mathbf{z}|))|$ (bottom) obtained by \mbox{\textbf{(c-ii)}} is sparser than that of \textbf{(c-i)}, illustrating the effect of cepstral-domain regularization via EPILOG in Sect.~\ref{sec:EPILOG_cepstrum}.

Next, we compare the methods quantitatively.
Table~\ref{tab:dereverberation_performance} summarizes the average dereverberation performance in terms of SI-SNR, STOI \cite{taal_short-time_2010}, and ViSQOL \cite{hines_visqol_2015}.
The method \textbf{(a-ii)} noticeably degraded performance compared to \textbf{(a-i)}, indicating that directly imposing DCT-domain sparsity on the complex-valued spectrogram was ineffective in this experiment.
The method \textbf{(b-ii)}, which promotes harmonic structure by imposing DCT-domain sparsity on the magnitude, showed competitive performance, though it did not reach the best score.
On the other hand, the proposed EPILOG-based methods \textbf{(c-i)} and \textbf{(c-ii)} outperformed the other methods.
The method \textbf{(c-i)}, where the regularization effect of $\gamma  R(\mathbf{L}\boldsymbol{\sigma})$ in Eq.~\eqref{eq:EPILOG_regularization} is excluded, achieved high performance, suggesting that the term $\Psi^{(\alpha,\epsilon)}
    (\mathbf{z},\boldsymbol{\sigma})$ itself serves as an effective sparsity-inducing regularizer.
Indeed, as shown in the top row of Fig.~\ref{fig:placeholder}, the log-magnitude spectrogram $\log(|\mathbf{z}|)$ obtained by \textbf{(c-i)} is sparser than those of \textbf{(a)} and \textbf{(b)}.
Furthermore, the method \textbf{(c-ii)}, where $\mathbf{L}$ corresponds to DCT, consistently achieved the highest performance across all metrics.
This result demonstrates the effectiveness of promoting cepstral-domain sparsity via the EPILOG regularizer.

\begin{table}[t]

\centering
\vspace{2.5pt}
\caption{
Average dereverberation performance.
The labels \textbf{(a)}--\textbf{(c)} indicate the regularization domain, and \textbf{(i)} and \textbf{(ii)} indicate regularization without and with the DCT, respectively.
Bold and underlined numbers represent the best and second-best values, respectively.
}
\vspace{-6pt}
\label{tab:dereverberation_performance}

\renewcommand{\arraystretch}{1.1}

\setlength{\tabcolsep}{1.5pt}

\begin{adjustbox}{max width=\columnwidth}

\begin{tabular}{c|c|cc|ccc}

\hline

&
Domain
& $\mathbf{L}$
& $\!\!\!R$
& SI-SNR $(\uparrow)$
& STOI $(\uparrow)$
& ViSQOL $(\uparrow)$

\\

\hline\hline

\rowcolor{gray!12}

\textbf{(a-i)}
&
& $\mathbf{I}$
& $\|\cdot\|_1$
& 15.260  & 0.911  & 3.314

\\

\rowcolor{gray!12}

\textbf{(a-ii)}
& \multirow{-2}{*}{Complex}
& $\mathrm{DCT}$
& $\|\cdot\|_{\mathbf{w}}$
& 11.440  & 0.883  & 3.024

\\

\hline

\textbf{(b-i)}
&
& $\mathbf{I}$
& $\!\!\!0$
& 15.250  & 0.910  & 3.310

\\

\textbf{(b-ii)}
& \multirow{-2}{*}{Mag.}
& $\mathrm{DCT}$
& $\|\cdot\|_{\mathbf{w}}$
& 15.216  & \underline{0.915}  & 3.382

\\

\hline

\rowcolor{gray!12}

\textbf{(c-i)}
&
& $\mathbf{I}$
& $\!\!\!0$
& \underline{15.670}  & 0.906  & \underline{3.437}

\\

\rowcolor{gray!12}

\textbf{(c-ii)}
& \multirow{-2}{*}{\shortstack{\vspace{-1pt}\\\textbf{Log-Mag.}\\\textbf{(ours)}}}
& $\mathrm{DCT}$
& $\|\cdot\|_{\mathbf{w}}$
& \textbf{15.871}  & \textbf{0.917}  & \textbf{3.497}

\\

\hline

\end{tabular}

\end{adjustbox}
\vspace{-5pt}
\end{table}

\section{CONCLUSION}

We proposed the EPILOG regularizer, a flexible technique for incorporating various prior knowledge in the log-magnitude domain. We established its log-magnitude-link property, derived its variable-wise proximity operators, and developed the proximal splitting algorithm tailored to EPILOG.
The speech dereverberation experiment demonstrated the effectiveness of cepstral-domain sparsity realized by EPILOG.
The proposed methodology opens the door to integrating a broader range of audio-specific priors in the log-magnitude domain into optimization-based signal processing.
Future work includes developing the algorithm with convergence guarantee and extending the application of EPILOG to a variety of signal processing tasks.

\newpage

\bibliographystyle{IEEEtran}
\bibliography{references}

\end{document}